\newtheorem{theorem}{Theorem}[section]
\newtheorem{proposition}[theorem]{Proposition}
\newtheorem{lemma}[theorem]{Lemma}
\newtheorem{corollary}[theorem]{Corollary}
\theoremstyle{definition}
\newtheorem{definition}[theorem]{Definition}
\newtheorem{example}[theorem]{Example}
\theoremstyle{remark}
\newtheorem{remark}[theorem]{Remark}
\numberwithin{equation}{section}
\theoremstyle{remark}
\DeclareMathOperator{\lc}{lc}
\DeclareMathOperator{\ddef}{def}
\DeclareMathOperator{\proper}{proper}
\DeclareMathOperator{\nonZ}{nonZ}
\DeclareMathOperator{\surj}{surj}
\DeclareMathOperator{\sqrfree}{sqrfree}
\newcommand{\C}{\mathbb{C}}
\newcommand{\K}{\mathbb{K}}
\newcommand{\LL}{\mathbb{L}}
\newcommand{\FF}{\mathbb{F}}
\newcommand{\Q}{\mathbb{Q}}
\newcommand{\N}{\mathbb{N}}
\newcommand{\V}{\mathbb{V}}
\newcommand{\Pa}{\mathcal{P}}
\newcommand{\PS}{\mathcal{S}}
\newcommand{\Cu}{\mathcal{C}}
\newcommand{\Res}{\mathrm{res}}
\begin{document}

\date{\today}
\title{Rational Solutions of Parametric First-Order Algebraic Differential Equations}

\author{Sebastian Falkensteiner}
\address{Max Planck Institute for Mathematics in the Sciences, Leipzig, Germany.}
\email{sebastian.falkensteiner@mis.mpg.de}

\author{J.Rafael Sendra}
\address{CUNEF Universidad, Departamento de Matemáticas, Spain}
\email{jrafael.sendra@cunef.edu}

{\color{red} The journal version of this paper appears in \\
\textit{[Sebastian Falkensteiner, J. Rafael Sendra,
Rational solutions of parametric first-order algebraic differential equations,
Bulletin des Sciences Mathématiques,
Volume 205,
2025,
103694,
ISSN 0007-4497,
https://doi.org/10.1016/j.bulsci.2025.103694. \\
(https://www.sciencedirect.com/science/article/pii/S0007449725001204)]}
\\
under the CC BY license ( \url{http://creativecommons.org/licenses/by/4.0/} ).}

\begin{abstract}
In this paper, we give an algorithm for finding general rational solutions of a given first-order ODE with parametric coefficients that occur rationally. 
We present an analysis, complete modulo Hilbert's irreducibility problem, of the existence of rational solutions of the differential equation, with parametric coefficients, when the parameters are specialized.
\end{abstract}

\maketitle

\keywords{Keywords. 
Algebraic ordinary diffe-ren-tial equation, parametric diffe-ren-tial equation, rational solution, algebraic curve, rational parametrization.}

\section{Introduction}
Let $\K$ be a computable field of characteristic zero and let $a_1,\ldots,a_n$ be unspecified parameters that eventually will take values in the algebraic closure $\overline{\K}$ of $\K$. 
Let $$\LL:= \K(a_1,\ldots,a_n).$$  
In this work we study first-order algebraic differential equations (AODE) of the form
\begin{equation}\label{eq-F}
F(y,y') = 0 ~\text{ with }~ F \in {\LL}[y,y']
\end{equation}
where $F$ is assumed to be irreducible over the algebraic closure {$\overline{\LL}$ of $\LL$}. 

For this purpose, we will compute symbolically with $a_i$ and analyze the behavior, in terms of rational solutions, of~\eqref{eq-F} when the parameters $a_i$ are specialized. For properly defining the evaluation, we use a parameter space $\PS$ such that
\begin{equation}\label{parameterSpace}
\PS:=\overline{\K}^{\,n}.
\end{equation}
In addition,  we might assume that $\PS$ is an algebraic subset of $\overline{\K}^{\,n}$ implicitly defined by polynomial relations among the $a_1,\ldots,a_n$.

First-order AODEs have been studied extensively, and there are several solution methods for special classes of them. 
However, most of them do not work with differential equations involving coefficients that depend on unknown parameters.

In the case of polynomial coefficients in $x$, Eremenko~\cite{eremenko1998rational} provides a degree bound for rational solutions and hence a method for determining them. 
A more efficient method has been introduced in~\cite{feng2004rational,feng2006polynomial} for autonomous first-order AODEs by associating an algebraic set to the given AODE. 
Then the well-known theory on algebraic curves can be used for finding properties of the rational solutions which help to simplify the differential problem and actually find the solutions. 
The extension to rational general solutions of first-order non-autonomous AODEs can be found in~\cite{RISC4106,RISC5400,RISC5589}. 
For algebraic solutions, we refer to~\cite{aroca2005algebraic,vo2015algebraic}. 
Local solutions of first-order autonomous AODEs are treated in~\cite{cano2019existence}.
For a wide panoramic vision of this algebraic geometry approach, we refer to the survey paper~\cite{falkensteiner2023algebro}.

We follow the algebraic-geometric approach. 
We consider the algebraic curve implicitly defined by the given first-order differential equation by viewing $y$ and $y'$ as independent variables. 
Algebraic curves involving parameters are treated in~\cite{falkensteiner2023rationality} and the results therein play here a crucial role as theoretical and algorithmic tools. Then, by considering the differential relation again, an associated differential equation can be derived. 

In this paper, we present an almost complete analysis of the existence of rational solutions of the differential equations of \eqref{eq-F} when the parameters are specialized in $\PS$. 
The main difficulty to provide a complete analysis is that one needs to deal with the (open) irreducibility problem of Hilbert. 
Note that a preliminary condition for most methods to find non--trivial rational solutions is that the associated curve should be rational and, hence, irreducible. Nevertheless, we provide an isolation of such specializations. 
More precisely, we provide a decomposition of the parameter space $\PS$ into three subsets:
\begin{itemize}
	\item The first subset contains the cases where the specialization of the differential polynomial is either not well--defined or it is a constant.
	\item The second subset contains the cases where the specialized differential equation either degenerates to a positive genus curve or it is reducible or, although generating a genus zero curve, it does not have non-trivial rational solutions.
	\item The third subset provides the specializations where the new different equation has non-trivial rational solutions.
\end{itemize}

The structure of the paper is as follows.
In Section~\ref{sec-parametricCurves} we fix notation and summarize the relevant results on parametric rational curves. 
In particular, the decomposition of the parameter space under the criterium of proving rational parametrizations is treated (see Subsection~\ref{sec-decomposition} and also Appendix ~\ref{sec-decomposition-surjectivity}). 

In Section~\ref{sec-autonomousParameters} we generalize previous results on first-order autonomous differential equations to the parametric case. 
This behavior is depending on the exact values of the constant parameters. 
We give a finite decomposition of the parameter space where the solvability is unchanged and general rational solutions can be computed whenever they exist (Theorems~\ref{thm:FengGaospec},\ref{thm:solCovering},\ref{thm:alg1}). 
We illustrate the algorithmic method by examples. In addition, we include an appendix where the decomposition of the parameter space w.r.t. rational parametrizations is extended to the case of providing a rational covering (Theorem~\ref{thm-surjectiveSpecialization}).

\section{Parametric rational curves}\label{sec-parametricCurves}
Let us first fix notations and recall some results on rational curves; for further details see~\cite{Cox1,SWP08}. In the two remaining subsections we analyze the behavior, under specializations, of parametric rational curves.

\subsection{Preliminaries}\label{sec-pre}
For a field $\mathcal{K}$, we denote by $\overline{\mathcal{K}}$ its algebraic closure.
We will express tuples with bold face letters. For instance, the tuple of undetermined parameters will be expressed as $\textbf{a}=(a_1,\ldots,a_n)$. Furthermore,  we will usually set $\LL=\K(\textbf{a})$ and $\FF=\LL(\delta)$ where $\delta$ is an algebraic element over $\LL$.

Let $F(y,y') \in \LL[y,y']$ be an irreducible polynomial (over $\overline{\LL}$) and depending on $y'$.
Then we define the \textit{associated curve to $F$} as the zero-set of $F$ over $\overline{\LL}$, i.e. 
\[ \Cu(F) = \{ (p,q) \in \overline{\LL}^2 \mid F(p,q)=0 \} .\]
A \textit{(rational) parametrization} of $\Cu(F)$ is a pair $\Pa(t) \in \overline{\LL}(t)^2 \setminus \overline{\LL}^2$ such that $F(\Pa(t))=0$ holds.
A rational parametrization of $\Cu(F)$ exists if and only if the genus of the curve is equal to zero~\cite[Theorem 4.63]{SWP08}. 
If $\Pa$ is birational, then $\Pa$ is called a \textit{proper} or \textit{birational} parametrization.
If $\Cu(F)$ admits a parametrization, we say that $\Cu(F)$ is a rational curve.

Let us note that for the differential part of the problem it would not be necessary to require that $F$ is irreducible. 
But only in this case $\Cu(F)$ can admit a rational parametrization~\cite[Theorem 4.4]{SWP08}.

In general, if one computes a parametrization $\Pa(t)$ of $\Cu(F)$, the ground field $\LL$ has to be extended. 
The coefficient field of $\Pa(t)$ is called the \textit{field of definition}. 
Moreover, a subfield $\FF$ of $\overline{\LL}$ is called a \textit{field of parametrization} of $\Cu(F)$ if there exists a parametrization with $\FF$ as field of definition.

One can achieve a field of parametrization $\FF=\LL(\delta)$, for some $\delta^2 \in \LL$, of $\Cu(F)$ as it is highlighted in the following theorem (see ~\cite{hilbert1890diophantischen} and Theorem 5.8. and Corollary 5.9. in \cite{SWP08}).
\begin{theorem}\label{theorem:HHext} \
\begin{enumerate}
	\item If $\deg(\Cu(F))$ is odd then $\LL$ is a field of parametrization.
	\item If $\deg(\Cu(F))$ is even then either $\LL$ is a field of parametrization or there exists $\delta\in \overline{\LL}$ algebraic over $\LL$, with minimal polynomial $t^2-\alpha\in \LL[t]$, such that $\LL(\delta)$ is a field of parametrization of $\Cu(G)$.
\end{enumerate}
\end{theorem}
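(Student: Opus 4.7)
The plan is to reduce the degree of $\Cu(F)$ step by step using birational transformations defined over $\LL$, until we arrive at a curve of degree $1$ or $2$, where the result is either immediate or follows from a single stereographic projection.

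First, I would invoke the classical adjoint construction: since $\Cu(F)$ is rational (genus zero), its singular locus is large enough that the linear system of adjoint curves of degree $\deg(\Cu(F))-2$ passing through all singularities with the appropriate multiplicities, together with $\deg(\Cu(F))-3$ extra simple base points, induces a birational map to a curve of degree $\deg(\Cu(F))-2$. The key point for controlling the field is that the singular points of $\Cu(F)$ form a set stable under the absolute Galois group of $\LL$, so the linear conditions defining the adjoint system are $\LL$-rational, and hence the system is spanned by polynomials in $\LL[y,y']$. To pick the $\deg(\Cu(F))-3$ extra simple base points in an $\LL$-rational way, I would choose them as a Galois-stable set cut out on $\Cu(F)$ by an auxiliary $\LL$-rational linear system; this part is where most care is needed, because one must guarantee the existence of such configurations without accidentally leaving $\LL$. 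After this step, one gets a birational map $\Cu(F) \dashrightarrow \Cu(F')$ with $F' \in \LL[y,y']$ and $\deg(\Cu(F'))=\deg(\Cu(F))-2$.

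Iterating this reduction preserves $\LL$-rationality at every step and decreases the degree by $2$ each time, so the parity of the degree is invariant. Therefore:
\begin{itemize}
\item if $\deg(\Cu(F))$ is odd, the process terminates at a curve of degree $1$, i.e.\ a line defined over $\LL$, which is trivially parametrized over $\LL$. Composing with the inverses of the $\LL$-birational maps constructed above yields a parametrization of $\Cu(F)$ over $\LL$, proving (1);
\item if $\deg(\Cu(F))$ is even, the process terminates at an $\LL$-rational conic $\Cu(G)$.
\end{itemize}

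For the conic case, I would argue as follows. Either $\Cu(G)$ possesses a point with coordinates in $\LL$, in which case stereographic projection from that point gives a parametrization over $\LL$ and we are done; or it does not, and then we pick any point $P\in \Cu(G)$ with coordinates in $\overline{\LL}$ and project from it. A direct computation shows that the coordinates of the parametrization obtained this way lie in $\LL(P)$, and a Galois-theoretic analysis of the conic (intersecting it with an $\LL$-rational line to find an $\LL$-rational pair of conjugate points) shows that one can choose $P$ so that $[\LL(P):\LL]=2$ with minimal polynomial of the form $t^2-\alpha$, giving the desired $\delta$. The main obstacle throughout is the careful bookkeeping of fields of definition under the adjoint construction, in particular verifying that the extra simple base points needed for the adjoint map can always be produced over $\LL$ itself; this is precisely the classical Hilbert--Hurwitz argument~\cite{hilbert1890diophantischen}.
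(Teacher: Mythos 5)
The paper does not actually prove this statement: it is quoted as the classical Hilbert--Hurwitz theorem with a citation to~\cite{hilbert1890diophantischen}, so there is no in-paper argument to compare against. Your reconstruction follows exactly that classical route (adjoint-based degree reduction by two, a parity argument, then the line/conic dichotomy) and is essentially correct, but one technical detail is misstated. If you impose $\deg(\Cu(F))-3$ extra simple base points on the adjoint system of degree $\deg(\Cu(F))-2$, you cut it down to a pencil whose free intersection with the curve is a single point; that pencil is the direct parametrization map to $\mathbb{P}^1$, not a birational map onto a curve of degree $\deg(\Cu(F))-2$. The degree-reduction step needs no extra base points at all: for $d=\deg(\Cu(F))$ the full adjoint system of degree $d-2$ has projective dimension $d-2$ and free intersection degree $d-2$ with the curve, and a generic net chosen inside it with coefficients in $\LL$ (possible because the singular locus, including infinitely near points, is Galois-stable, so the adjoint conditions are $\LL$-rational, and $\LL$ is infinite) already maps $\Cu(F)$ birationally onto a plane curve of degree $d-2$ defined over $\LL$. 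This also dissolves the ``main obstacle'' you flag: the difficulty of producing Galois-stable configurations of $d-3$ simple points over $\LL$ is precisely what the Hilbert--Hurwitz reduction is designed to avoid, and it reappears only at the very end, in the harmless form of intersecting the terminal conic with an $\LL$-rational line to obtain a conjugate pair of points generating the quadratic extension $\LL(\delta)$ with $\delta^2=\alpha$ (after completing the square, which is possible in characteristic zero). With that correction, your argument is the standard proof of the cited theorem.
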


The notation for the evaluation of parameters $\textbf{a}$ will simply consist in replacing the parameters by $\textbf{a}^0\in \PS$ and, if the dependencies on $\textbf{a}$ are not explicitly stated, by prepending $\textbf{a}^0$ in the argument. 
At some steps, it might be necessary to work with the field extension provided by an algebraic element $\gamma(\textbf{a}) \in \overline{\K(a_1,\ldots,a_n)} \setminus \overline{\K}(a_1,\ldots,a_n)$ that depends on the parameters $\textbf{a}$.
Also in this case, we will simply write the dependencies on $\textbf{a}$ and not explicitly state $\gamma(\textbf{a})$ in the argument. For given  $f,g \in \overline{\K(\textbf{a})}[\textbf{z}]$, we denote by $\Res_{z_0}(f,g)$ the resultant of $f$ and $g$ with respect to the variable $z_0$ among $\textbf{z}$.

In addition, throughout the paper we use some Zariski-open subsets of the parameter space $\PS$ (see~\eqref{parameterSpace}), to be considered when specializing the parameters, that have been defined in~\cite{falkensteiner2023rationality}. More precisely, for  $R, f, g \in \overline{\K}(\textbf{a})[\textbf{z}]$ we use:
\begin{enumerate}
	\item  The set $\Omega_{\ddef(R)}$ where $R$ is defined under specialization, and $\Omega_{\nonZ(R)}$ where $R$ is defined and non-zero under specialization. If $R$ is the defining polynomial of an algebraic curve, we additionally ask the degree of $R$ w.r.t. $\textbf{z}$ to be preserved under specialization
    (see~\cite[Definition 3.3]{falkensteiner2023rationality}).
	\item The set $\Omega_{\gcd(f,g)}$ (see~\cite[Definition 3.6]{falkensteiner2023rationality}). 
	When specializing in this open subset, the gcd behaves properly; that is, the gcd of the specialized polynomials is the specialization of the gcd, and the degree of the gcd does not change after evaluation.
	\item For squarefree $f$, we use $\Omega_{\sqrfree(f)}$ as in~\cite[Definition 3.12]{falkensteiner2023rationality} such that for $\textbf{a}^0 \in \Omega_{\sqrfree(f)}$ it holds that $f(\textbf{a}^0;\textbf{z})$ is square-free. 
	\item For a given proper rational parametrization in reduced form $\Pa=(p_1/q_1,p_2/q_2) \in \overline{\mathbb{L}}(t)^2$ of $\Cu(F)$, we define $${\Omega_{\ddef(\Pa)}=\Omega_{\ddef(p_1)}\cap \Omega_{\ddef(p_2)}\cap \Omega_{\nonZ(q_1)}\cap   \Omega_{\nonZ(q_2)}}.$$ Moreover, we consider the open set $\Omega_{\proper(\Pa)} \subseteq \PS$ as in~\cite[Definition 5.1 and Definition 5.4]{falkensteiner2023rationality} such that every specialization $\textbf{a}^0 \in \Omega_{\proper(\Pa)}$ satisfies that $\Pa(\textbf{a}^0;t)$ is a proper parametrization of the specialized curve $\Cu(\textbf{a}^0;F)$.  We observe that $\Omega_{\ddef(\Pa)}\subseteq \Omega_{\proper(\Pa)}$ and $\Omega_{\ddef(F)}\subseteq \Omega_{\proper(\Pa)}$. 
\end{enumerate}
 For a prime ideal $\mathcal{I}$ of $\K[\textbf{a}]$, we will denote the quotient field of the unique factorization domain $\overline{\K}[\textbf{a}]/\mathcal{I}$ as $\mathcal{K}:=\mathcal{Q}(\overline{\K}[\textbf{a}]/\mathcal{I})$. 
Then the coefficients of the polynomials $R, f, g$ might also be considered as the canonical representatives in $\mathcal{K}$ and specializations are taken in $\textbf{a}^0 \in \V(\mathcal{I}) \subset \PS$ (see also~\cite[Section 6]{falkensteiner2023rationality}).

\subsection{Decomposition w.r.t. parametrizations}\label{sec-decomposition}
Let us now give further details on rational parametrizations of families of rational curves depending on several unspecified parameters. In particular, the idea is to decompose the space $\PS$ (see~\eqref{parameterSpace}), where the parameters take values, in constructible subsets where the curve, when the parameters are specialized, satisfies certain properties. For details on parametric rational curves we refer to~\cite{falkensteiner2023rationality}.

In the sequel, $\Cu(F)$ is an irreducible algebraic curve defined by $F\in \LL[y,z]$. We start with the following definitions.

\begin{definition}\label{def-dec-param-prev}
Let $\PS^*\subset \PS$ and let $\Pa(t) \in \overline{\LL}(t)^2$ be a proper rational parametrization of $\Cu(F)$. We say that $\Pa(t)$ is \textit{$\PS^*$--admissible} if for all $\textbf{a}^0\in \PS^{*}$ it holds that $\Pa(\textbf{a}^0;t)$ is a proper rational parametrization of $\Cu(\textbf{a}^0; F)$. In particular, $\Pa(\textbf{a}^0;t)$ is defined and not constant.
\end{definition}


\begin{definition}\label{def-dec-param} Let  $F\in \LL[y,z]$ be irreducible. 
Let $I\subset \N$ be finite. For $i\in I$, let $\PS_1,\PS_2,\PS_{3,i}\subset \PS$ be disjoint  constructible sets, let 
\[ \PS_{3}=\displaystyle{\dot{\bigcup}_{i\in I} \PS_{3,i}}, \]
and let $\Pa_i(\textbf{a}^0;t)$ be $\PS_{3,i}$--admissible parametrizations. 
We say that
\begin{equation}\label{eq-decomp2}
\PS_1 \,\dot\cup\, \PS_2\, \dot\cup\, \PS_3,
\end{equation} 
is a \textit{decomposition of $\PS$ w.r.t. (rational) parametrizations of $\Cu(F)$} if  
\begin{itemize}
\item[(a)] $\displaystyle{\PS=\PS_1 \,\dot\cup\, \PS_2\,\dot\cup\, \PS_3.}$
\item[(b)]
For every specialization  $\textbf{a}^0\in \PS_j$, case (j) below holds:
\begin{enumerate}
    \item either $F(\textbf{a}^0;y,z)$ is not well--defined or $F(\textbf{a}^0;y,z)\in \overline{\K}$; in this case, we say that the specialization degenerates;
    \item the genus of $\Cu(\textbf{a}^0;F)$ is positive, or $F(\textbf{a}^0;y,z)$ is reducible (over $\overline{\K}$);
    \item the genus of $\Cu(\textbf{a}^0;F)$ is zero and $\Pa_i(\textbf{a}^0;t)$ is a proper parametrization of $\Cu(\textbf{a}^0;F)$.\end{enumerate}
\end{itemize}
\end{definition}

\begin{remark}\label{rem-def-dec-param} \,
\begin{enumerate}
\item With abuse of notation, we might write $\{(\PS_{3,i},\Pa_i)\}$ instead of $\PS_{3,i}$.
\item The decomposition of $\PS_3$ is, in general, not unique since it depends in particular on the chosen rational parametrizations.
\item As a result of the process described in~\cite{falkensteiner2023rationality} (in particular, Section 6), a decomposition of $\PS$ w.r.t. (rational) parametrizations always exists and can be computed algorithmically.
\end{enumerate}
\end{remark}

\begin{proposition}\label{cor:componentpara}
Let $F \in \LL[y,z]$ be irreducible and let $\PS=\PS_1 \dot\cup \PS_2 \dot\cup \PS_3$, $\PS_{3}=\dot{\bigcup}_{i\in I} \PS_{3,i}$ be a decomposition w.r.t. parametrizations of $\Cu(F)$. 
Let $\textbf{a}^0 \in \PS$ be such that $\Cu(\textbf{a}^0;F)$ is rational. Let $\tilde{\Pa} \in \overline{\K}(t)^2$ be a rational parametrization of $\Cu(\textbf{a}^0;F)$. 
Then, there exists a component $\PS_{3,i}$ with a corresponding parametrization $\Pa_i(t)$ such that $\tilde{\Pa}$ is a reparametrization of $\Pa_i$, i.e. $\Pa_i(\textbf{a}^0;s) = \tilde{\Pa}(t)$ for some non--constant $s \in \overline{\K}(t)$.
\end{proposition}
\begin{proof}
By assumption, $\Cu(\textbf{a}^0;F)$ is rational. 
Thus, by construction, there exists $(\PS_{3,i},\Pa_i)$, with $\textbf{a}^0 \in \PS_{3,i}$ such that $\Pa_i(\textbf{a}^0;t)$ is a proper parametrization of $\Cu(\textbf{a}^0;F)$. 
By~\cite[Lemma 4.17]{SWP08}, $\tilde{\Pa}$ is a reparametrization of $\Pa_i(\textbf{a}^0;t)$.
\end{proof}

Finally, let us note that in the case of a single parameter, the field of definition can be chosen as the base field:
\begin{remark}\label{rem:oneparameterfieldofparametrization}
Based on Tsen's theorem~\cite{ding1999chiungtze}, in the case of $n=1$ and $F \in \K(a)[y,z]$ defining the rational curve $\Cu(F)$, $\K(a)$ is a field of parametrization~\cite[Corollary 2.3]{falkensteiner2023rationality}.
\end{remark}

\section{Differential equations with constant parameters}\label{sec-autonomousParameters}
{ In this section, we will use some of the notation introduced previously:
We write in bold letters tuples such as for the unspecified parameters $\textbf{a}=(a_1,\ldots,a_n)$, the parameter space where $\textbf{a}$ are evaluated as $\PS$, and the evaluation of an expression $g(x,y)$ with coefficients in $\textbf{a}$ as $g(\textbf{a}^0;x,y)=g(x,y)|_{\textbf{a}=\textbf{a}^0}$. 
Moreover, $\K$ denotes a computable field of characteristic zero, $\LL=\K(\textbf{a})$ is the field of rational functions in the variables $\textbf{a}$, and for a given irreducible $F \in \LL[y,y']$ we denote by $\Cu(F)$ the corresponding algebraic curve over $\overline{\LL}$. 
The field $\LL$ may also be chosen as $\FF_\mathrm{J}$ for some prime ideal $\mathrm{J}$ as explained below in Remark~\ref{rem-eq-decomp3}(3).

The following two statements, Lemma~\ref{lem-nec} and Theorem~\ref{thm:FengGao}, follow by the same proof as of Theorems 2 and 5 in~\cite{feng2004rational}, respectively, by replacing the coefficient field $\Q$ with $\LL$.

\begin{lemma}\label{lem-nec}
Let $y(x) \in \overline{\LL}(x)$ be a solution of $F(y,y')=0$ where $F \in \LL[y,y']$ is irreducible. 
Then $(y(t),y'(t))$ is a proper rational parametrization of $\Cu(F)$.
\end{lemma}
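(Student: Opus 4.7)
The plan is to first verify that $\Pa(t) := (y(t), y'(t))$ parametrizes $\Cu(F)$ rationally, and then to establish properness via a short differentiation trick that reduces the question to the nonexistence of a nonconstant periodic rational function.

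The parametrization step is immediate. Since $y(x)$ satisfies the ODE, the identity $F(y(t), y'(t)) = 0$ holds in $\overline{\LL}(t)$, so the image of $\Pa$ lies on $\Cu(F)$. I will implicitly assume $y$ is nonconstant; otherwise $\Pa(t) = (c, 0)$ would collapse to a single point rather than a parametrization. Under this assumption $\Pa$ has a nonconstant component, so by definition it is a rational parametrization of $\Cu(F)$.

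For properness I will argue by contradiction. If $\Pa$ has generic degree at least two, then the incidence variety $\{(t,s) : y(t) = y(s),\ y'(t) = y'(s)\} \subseteq \overline{\LL}^2$ contains an irreducible component $\mathcal{D}$ distinct from the diagonal. On $\mathcal{D}$ one can write $s = \phi(t)$ for some algebraic function $\phi$, so that $y(\phi(t)) = y(t)$ and $y'(\phi(t)) = y'(t)$. Differentiating the first identity with respect to $t$ and substituting the second gives $y'(t)(\phi'(t)-1) = 0$; since $y$ is nonconstant we have $y'(t) \not\equiv 0$, forcing $\phi'(t) = 1$. Because $\K$ (and hence $\LL$) has characteristic zero, the constants of $d/dt$ in any algebraic extension of $\overline{\LL}(t)$ are exactly $\overline{\LL}$, so $\phi(t) = t + c$ for some $c \in \overline{\LL}$, with $c \neq 0$ since $\mathcal{D}$ lies off the diagonal. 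But then $y(t+c) = y(t)$, exhibiting $y$ as a nonconstant periodic element of $\overline{\LL}(t)$ --- impossible, since $y$ has only finitely many poles whereas such a periodic function would have infinitely many.

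The hardest point is the implication "$\phi$ algebraic and $\phi' = 1$ $\Rightarrow$ $\phi = t + c$''; it requires identifying the constants of $d/dt$ in an algebraic extension of $\overline{\LL}(t)$ and is where characteristic zero is essential. Once this is in hand, the remaining ingredients are elementary and the parametric nature of $\LL = \K(\textbf{a})$ plays no distinguished role, so the proof transfers verbatim from the classical setting of~\cite{feng2004rational}.
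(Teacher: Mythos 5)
Your argument is essentially the one the paper relies on: the paper gives no proof of its own but defers to~\cite{feng2004rational}, where properness is likewise reduced to showing that an improper parametrization would force $y$ to satisfy $y(t+c)=y(t)$ for some nonzero constant $c$, which is impossible for a nonconstant rational function. Your differentiation trick (deriving $\phi'=1$ from the two incidence identities and then using that the constants of $d/dt$ in an algebraic extension of $\overline{\LL}(t)$ are exactly $\overline{\LL}$) is a clean, field-theoretic substitute for the uniqueness-of-solutions step in the classical argument, and it is correct; your remark that $y$ must be assumed nonconstant also matches the paper's implicit convention (constant solutions are handled separately in Remark~\ref{rem-zero}).

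One small repair is needed at the very end: you rule out periodicity by saying a periodic rational function would have infinitely many poles, but this gives no contradiction when $y$ is a nonconstant \emph{polynomial} (no poles at all), and polynomial solutions certainly occur. Replace the pole count by a fiber count: for any $t_0$, the value $y(t_0)$ is attained at all points $t_0+kc$, $k\in\Z$, which are pairwise distinct since $\mathrm{char}\,\K=0$ and $c\neq 0$, contradicting the fact that a nonconstant rational function takes each value only finitely often. With that one-line change the proof is complete.
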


Since all proper rational parametrizations of $\Cu(F)$ are related by a Möbius-transformation, after a careful analyzation including the derivative, the following can be shown.

\begin{theorem}\label{thm:FengGao}
Let $F \in \LL[y,y']$ be irreducible and let $\Pa(t)=(P_1(t),P_2(t)) \in \LL(t)^2$ be a proper parametrization of $\Cu(F)$. 
Then there is a  non-constant rational solution of $F(y,y')=0$ if and only if either
\begin{equation}\label{eq:FengGaoCases}
\alpha\,P_1'(t)=P_2(t) \ \text{ or } \ \alpha\,(t-\beta)^2\,P_1'(t)=P_2(t) 
\end{equation}
for some $\alpha, \beta \in \LL$ with $\alpha \ne 0$. 
In the affirmative case, $y(x)=P_1(\alpha \cdot (x+c))$ (or $y(x)=P_1(\beta-\frac{1}{\alpha \cdot (x+c)})$), where $c \in \overline{\LL}$ is an arbitrary constant, defines all rational solutions of $F(y,y')=0$.
\end{theorem}

The previous theorem motivates the next definition.

\begin{definition}\label{def-sol}
With the notation of Theorem~\ref{thm:FengGao}, if $\Pa(t)$ satisfies~\eqref{eq:FengGaoCases}, we call $$y(x)=P_1(\alpha \cdot (x+c)),\,\,\text{or}\,\, y(x)=P_1\left(\beta-\frac{1}{\alpha \cdot (x+c)}\right),$$ depending on the case, the \textit{rational  {(general)} solution generated by $\Pa(t)$}.
\end{definition}

\begin{remark}\label{rem-zero}
Let us note that if $\alpha=0$ in the first case of Theorem~\ref{thm:FengGao}, we obtain a constant solution given as $y(x)=P_1(0)$. 
Not all constant solutions of $F(y,y')=0$, however, might be found in this way.
\end{remark}

The following is an adapted version of~\cite[Theorem 6]{feng2004rational} to our setting justifying to consider solutions without field extensions involving the parameters.

\begin{theorem}\label{thm:solutionExtension}
Let $F \in \LL[y,y']$ be irreducible. 
If there exists a  {non-constant} solution $y(x) \in \overline{\LL}(x)$ of $F(y,y')=0$, then there is another solution $z(x) \in \LL(x)$.
\end{theorem}

\subsection{Specializations of the parameters}\label{subsec-FG}
Let $F(y,y')$ be a differential polynomial as in~\eqref{eq-F}. 
We now study rational solutions of $F({\textbf{a}^0};y,y')$ for $\textbf{a}^0\in \PS$. 

\begin{definition}
   Let $F \in \K(\textbf{a})[y,y']$ be as in~\eqref{eq-F}, $y(x) \in \overline{\LL}(x)$ be a non-constant rational solution of $F(y,y')=0$. Let $\mathcal{S}^*\subset \mathcal{S}$. We say that $y(x)$ is $\mathcal{S}^{*}$--admissible if for every $\textbf{a}^0\in \mathcal{S}^*$ it holds that 
   $F(\textbf{a}^0;y,z)$ and $y(\textbf{a}^0;x)$ are both well defined, and $y(\textbf{a}^0;x)$ is a non-constant rational solution of $F(\textbf{a}^0;y,y')=0$. 
\end{definition}

\begin{proposition}\label{prop:solSpec}
Let $F \in \LL[y,y']$ be as in~\eqref{eq-F}, $y(x) \in  {\overline{\LL}}(x)$ be a non-constant rational solution of $F(y,y')=0$, $\Pa=(y(t),y'(t))$, and let ${\textbf{a}^0} \in \Omega_{\ddef(F)} \cap \Omega_{\ddef(\Pa)}$ (see Subsec.~\ref{sec-pre}).
Then $y({\textbf{a}^0};x)$ is a well--defined rational solution of $F({\textbf{a}^0};y,y')=0$. Moreover, if $\textbf{a}^0 \in \Omega_{\proper(\Pa)}$, then $y({\textbf{a}^0};x)$ is a non-constant rational solution of $F( {\textbf{a}^0};y,y')=0$, that is, $y(x)$ is $ \Omega_{\proper(\Pa)}$--admissible  (see Subsec.~\ref{sec-pre} for the notion of $\Omega_{\proper(\Pa)}$).
\end{proposition}
\begin{proof} Let ${\textbf{a}^0} \in \Omega_{\ddef(F)} \cap \Omega_{\ddef(\Pa)}$. 
By assumption, the specialization of $\Pa( {\textbf{a}^0};t)$ remains a zero of $F( {\textbf{a}^0};y,y')$. So if $F( {\textbf{a}^0};y,y')$ is a constant, then it is identically zero and the statement trivially holds. 
Let the specialization $F( {\textbf{a}^0};y,y')$ be non-degenerate. 
Since the second component of $\Pa( {\textbf{a}^0};t)$ remains the derivative of the first and $\Pa( {\textbf{a}^0};t)=(y( {\textbf{a}^0};t),y'( {\textbf{a}^0};t))$ is well--defined, $F( {\textbf{a}^0};y( {\textbf{a}^0};x),y'( {\textbf{a}^0};x))=0$. Now, let $\textbf{a}^0 \in  \Omega_{\proper(\Pa)}$. Then, $\Pa( {\textbf{a}^0};t)=(y( {\textbf{a}^0};x),y'( {\textbf{a}^0};x))$ is a (proper) parametrization, and hence $y( {\textbf{a}^0};x)$ cannot be a constant. Thus, $y(x)$ is $ \Omega_{\proper(\Pa)}$--admissible. 
\end{proof}

Proposition~\ref{prop:solSpec} treats the case where a rational solution of $F$ exists. 
In the following, we analyze the cases where $y({\textbf{a}^0};x)$ is not well--defined or $F(y,y')$ itself does not admit a rational solution. 
We show how all solutions under those specializations can be found where $F({\textbf{a}^0};y,z)$ remains irreducible and represent the solution set in a finite way. 
Observe that the problem of algorithmically finding the parameters ${\textbf{a}^0} \in \PS$ such that $F({\textbf{a}^0};y,z)$ is reducible is an open problem (see e.g.~\cite{Hilbert1892} and \cite{Serre1997}), but the decomposition~\eqref{eq-decomp2} provides an isolation of such specializations (they are in $\PS_2$).

Similarly as in~\cite{falkensteiner2023rationality}, let us decompose the parameter space such that the behavior for every specialization in a component is the same. 

\begin{theorem}\label{thm:FengGaospec}
Let $F \in \LL[y,y']$ be as in~\eqref{eq-F} and let $\Pa=(P_1,P_2) \in \overline{\LL}(t)^2$ be a proper parametrization of $\Cu(F)$. 
Then the following holds.
\begin{enumerate}
    \item If $\Pa$ fulfills~\eqref{eq:FengGaoCases} for some $\alpha, \beta \in \overline{\LL}$ with $\alpha \ne 0$ leading to the rational solution $y(x)$ of $F(y,y')=0$, then for every $\textbf{a}^0 \in  \Omega_{\proper(\Pa)}$ (see Subsec.~\ref{sec-pre}) it holds that $y(\textbf{a}^0;x)$ is a rational solution of $F(\textbf{a}^0;y,y')=0$.
    \item If $\Pa$ does not fulfill~\eqref{eq:FengGaoCases}, let $A/B=P_2/P_1'$ be such that $A,B$ are coprime.  Then for every $\textbf{a}^0\in\hat{\Omega}$, where
    $$\hat{\Omega}:=  \Omega_{\nonZ(\Res_t(A,B))} \cap \Omega_{\nonZ(\lc(A))} \cap \Omega_{\nonZ(\lc(B))} \cap \Omega_{\sqrfree(A/B)}  \cap \Omega_{\mathrm{proper}(\Pa)},$$ also $\Pa(\textbf{a}^0;t)$ does not fulfill~\eqref{eq:FengGaoCases}.
\end{enumerate}
\end{theorem}
\begin{proof}
Item (1) holds due to Proposition~\ref{prop:solSpec}. 
For item (2), note that for every $\textbf{a}^0 \in \Omega_{\mathrm{FG}}$ it holds that $A,B$ have the same degrees as $A(\textbf{a}^0), B(\textbf{b})$, respectively, and $A(\textbf{a}^0), B(\textbf{a}^0)$ are again coprime. 
Thus, if $A/B$ is not a polynomial of degree zero or two, then this is neither the case for $A(\textbf{a}^0)/B(\textbf{a}^0)$. 
So it remains to consider those two cases. 
If $A(\textbf{a}^0)/B(\textbf{a}^0)$ is constant, then this was already the case for $A/B$, in contradiction to the assumption that $A/B=P_2/P_1'$ does not fulfill the first condition in~\eqref{eq:FengGaoCases}. 
If $A/B$ (and $A(\textbf{a}^0)/B(\textbf{a}^0)$) is a polynomial of degree two, then it has to be square-free; otherwise it would fulfill the second condition in~\eqref{eq:FengGaoCases}. 
Since in this case we assume that $\textbf{b} \in \Omega_{\sqrfree(A,B)}$, also $A(\textbf{a}^0)/B(\textbf{a}^0)$ is square-free and hence, does not fulfill~\eqref{eq:FengGaoCases}.
\end{proof}

\begin{remark}\label{rem-onFengGaoSpec}
Note that the contraposition of item (2) in Theorem~\ref{thm:FengGaospec} is: If there exists $\textbf{a}^0 \in  \hat{\Omega}$ such that $\Pa(\textbf{a}^0;t)$ fulfills~\eqref{eq:FengGaoCases}, then $\Pa$ fulfills~\eqref{eq:FengGaoCases}. Moreover, the case fulfilled in~\eqref{eq:FengGaoCases} remains the same for specializations and generalizations of the parameters.
\end{remark}

 Based on this result we introduce the following definition.

\begin{definition}\label{def-FG}
With the notation of Theorem~\ref{thm:FengGaospec}, we define 
\begin{equation*}
\Omega_{\mathrm{FG}}=
\begin{cases}
\Omega_{\proper(\Pa)} & \text{if $\Pa$ fulfills~\eqref{eq:FengGaoCases},}
\\
\hat{\Omega} & \text{if $\Pa$ does not fulfill~\eqref{eq:FengGaoCases}.}
\end{cases}
\end{equation*}
\end{definition}}
 Also, we define the following decomposition of $\PS$.

\begin{definition}\label{def-dec-ratsol}
Let $F(y,y')=0$ be as in~\eqref{eq-F}.  
Let $I\subset \N$ be finite. For $i\in I$, let ${\PS}_{1}^{*}, {\PS}_{2}^{*},{\PS}_{3,i}^{*}\subset \PS$ be disjoint constructible sets, let 
\[ {\PS}_{3}^{*}=\displaystyle{\dot{\bigcup}_{i\in I} {\PS}_{3,i}^{*}}, \] 
and let $y_i(\textbf{a},x)\in \overline{\LL}(x)$ be $\PS_{3,i}^{*}$--admissible solutions. 
We say that 
\begin{equation}\label{eq-decomp4}
 {\PS}_{1}^{*} \,\dot\cup\, {\PS}_{2}^{*}\, \dot\cup\, {\PS}_{3}^{*}, 
\end{equation} 
is a \textit{decomposition w.r.t. rational solutions of $F(y,y')=0$} if  
\begin{itemize}
\item[(a)] $\displaystyle{\PS= {\PS}_{1}^{*} \,\dot\cup\,{ \PS}_{2}^{*}\,\dot\cup\, {\PS}_{3}^{*}.}$
\item[(b)]
For every specialization $\textbf{a}^0\in {\PS}_{j}^{*}$, case (j) holds:
\begin{enumerate}
    \item either $F(\textbf{a}^0;y,y')$ is not well--defined or $F(\textbf{a}^0;y,y')\in \overline{\K}$;
    \item either $F(\textbf{a}^0;y,y')=0$ does not have a non-constant rational solution, or $F(\textbf{a}^0;y,y')$ is reducible (over $\overline{\K}$);
    \item $y_j(\textbf{a}^{0};x)$ is well defined and is a non-constant rational solution of $F(\textbf{a}^0;y,y')=0$.
\end{enumerate}
\end{itemize}
\end{definition}

\begin{remark} 
\begin{enumerate}\
\item 
The main structural difference between a decomposition w.r.t. para\-me\-trizations (see Definition~\ref{def-dec-param}) $\PS_1 \cup \PS_2 \cup \PS_3$ and a decomposition w.r.t. rational solutions $\PS_1^* \cup \PS_2^* \cup \PS_3^*$ is that the associated elements to $\PS_3$ are rational parametrizations of the defining algebraic equation and that to $\PS_3^*$ are non-constant rational solutions of the defining differential equation, respectively.
\item If $y_i(\textbf{a}^{0};x)$ is a non-constant rational solution of $F(\textbf{a}^0;y,y')=0$, by Theorem~\ref{thm:FengGao}, 
$y_i(\textbf{a}^{0};x+c)$ is a general rational solution of $F(\textbf{a}^0;y,y')=0$.
\end{enumerate}
\end{remark}

\subsection{Algorithmic treatment}

Using the results in Subsection~\ref{subsec-FG}, 
combined with Section~\ref{sec-parametricCurves}, we can derive an algorithm for computing a decomposition w.r.t. rational solutions.
For this purpose, we will need the computation of a decomposition w.r.t. parametrizations as in Definition~\ref{def-dec-param}, see Remark~\ref{rem-def-dec-param}, and we call this auxiliary algorithm \textsc{ParamDecomposition}. Let us say that this decomposition is 
\begin{equation*}
\PS_1 \,\dot\cup\, \PS_2\, \dot\cup\, \PS_3,
\end{equation*} 
with
\[ \PS_{3}=\displaystyle{\dot{\bigcup}_{i\in I} \PS_{3,i}}. \]
For checking~\eqref{eq:FengGaoCases} in the constructible sets $\PS_{3,i}$, different approaches can be considered. We note that, by construction, $\PS_{3,i}$ could be written as a set of equations and inequations implicitly given by polynomials in $\K[\textbf{a}]$. 
Then one can use a Gr\"obner basis of the defining polynomials, adding the inequations via the Rabinowitsch trick to the system, and reduce the question to an ideal membership problem~\cite[Section 4]{Cox1}. Alternatively, one might compute and work with triangular set decompositions such as (algebraic) Thomas decomposition~\cite{bachler2012algorithmic} or regular chains~\cite{kalk1993}. 
Corresponding to the constructible sets $\PS_{3,i}$ and the parametrizations $\Pa$, we instead choose to work with the open sets
$\Omega_{\mathrm{FG}}$, that are recursively defined over different specialization spaces (see $\Sigma$ in Algorithm~\ref{alg-ConstParameters}). Depending on the behavior of $\Pa$ w.r.t.~\eqref{eq:FengGaoCases}, $\Omega_{\mathrm{FG}}$ is defined differently. If $\Pa$ fulfills~\eqref{eq:FengGaoCases}, then $\Omega_{\mathrm{FG}}=\Omega_{\mathrm{proper}(\Pa)}\supseteq \PS_{3,i}$. However, if $\Pa$ does not fulfill~\eqref{eq:FengGaoCases} then in general $\Omega_{\mathrm{FG}} \not\supseteq \PS_{3,i}$. 
{In this situation, if the parameter subspaces in the output should remain disjoint, additionally $\PS_{3,i} \setminus \Omega_{\mathrm{FG}}$ has to be considered. We work with a closed superset of $\PS_{3,i} \setminus \Omega_{\mathrm{FG}}$ in order to apply a prime decomposition comparable to~\cite[Section 6]{falkensteiner2023rationality} to do so. Otherwise one might get solutions that are covered several times (cf. Example~\ref{ex-1}).} More precisely, we consider the following algorithm. 

\begin{algorithm}[H]
\caption{ConstantParameterSolve}
\label{alg-ConstParameters}
\begin{algorithmic}[1]
    \REQUIRE A first-order AODE $F(y,y')=0$ as in~\eqref{eq-F}.
    \ENSURE A decomposition w.r.t. rational solutions of $F(y,y')=0$ and their associated solutions.
    \STATE By Algorithm~ \textsc{ParamDecomposition}, compute a decomposition w.r.t. parametrizations $$\PS = \dot\bigcup_{j=1}^3\PS_j, \PS_{3} = \dot\bigcup_{i \in I}\PS_{3,i}.$$
    \STATE Set $\PS_1^*=\PS_1, \PS_2^*=\PS_2$, $\PS_{3}^*=\emptyset$, $\mathcal{R}=\overline{\K}[\textbf{a}]$ and $\Sigma=\PS$.
    \STATE For every $i \in I$, perform the following steps.
        \STATE For $\Pa = (P_1(t),P_2(t))$ corresponding to $\PS_{3,i}$, check whether equation~\eqref{eq:FengGaoCases} is fulfilled in $\Sigma$ over $\mathcal{Q}(\mathcal{R})$ with $\alpha \ne 0$.
        \STATE In the positive case, add $\Omega_{\mathrm{FG}} \setminus (\PS_1^* \cup \PS_2^* \cup \PS_3^*)$ and the associated (general) rational solution $y_i(x+c)=P_1(\alpha\,(x+c))$ (or $y_i(x+c)=P_1(\beta-\frac{1}{\alpha\,(x+c)})$) to $\PS_3^*$, and move to the next component.
        \STATE In the negative case add $\Omega_{\mathrm{FG}} \setminus (\PS_1^* \cup \PS_2^* \cup \PS_3^*)$ to $\PS_2^*$.
        \STATE  If $\Sigma \setminus \Omega_{\mathrm{FG}} \ne \emptyset$, decompose the polynomial ideal given by $\Sigma \setminus \Omega_{\mathrm{FG}}$ in its prime components $\mathcal{I}_1,\ldots,\mathcal{I}_{k}$. Otherwise move to the next component.
        \STATE For every $\ell \in \{1,\ldots,k\}$, repeat the loop (4)-(8) with {$\mathcal{R} \leftarrow \mathcal{R}/\mathcal{I}_\ell$ as new base field in step (4)} and $\Sigma=\V(\mathcal{I}_\ell)$ as new parameter space.
    \STATE Return $\PS_{1}=\PS_{1}^*, \PS_2^*, \PS_{3}^*$ and $\{(\PS_{3,i}^*, y_i(x+c))\}_{i \in I^*}$.
\end{algorithmic}
\end{algorithm}

{Let us emphasize that the field $\mathcal{Q}(\mathcal{R})$ is just used as ground field  for performing the corresponding arithmetic operations in Step (4), and in particular for checking equation~\eqref{eq:FengGaoCases}. Note that the original parameter space $\PS$, and the subsequence replacements, are considered over the original base field $\overline{\mathbb{K}}$ throughout the whole algorithm.}

\begin{theorem}\label{thm:alg1}
Algorithm~\ref{alg-ConstParameters} is correct.
\end{theorem}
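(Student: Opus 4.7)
The plan is to verify two things: that the procedure terminates, and that the returned data $\PS_1, \PS_2, \PS_{3,i}, y_i(x)$ meet the requirements of a decomposition w.r.t.\ rational solutions together with correct corresponding solutions, as characterized in Corollary~\ref{cor-behaviorSameComponent}.

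For termination, Step~1 produces only finitely many components by the correctness of \textbf{RationalCovering}, and Step~2 is a finite test applied to each one. The loop in Step~3 is the delicate point. Each time~\eqref{eq:FengGaoCases} fails on a component, the Zariski open locus $\Omega_{FG}$ from Theorem~\ref{thm:FengGaospec}(2) is absorbed into $\PS_2$, while its complement is decomposed into finitely many prime components over which the analysis is re-run. I would argue termination by a Noetherian descent on Krull dimension: on each new sub-component the corresponding parametrization is recomputed over the quotient ring, and the complement of the open subset considered there has strictly smaller dimension than the ambient component; since the dimension is bounded below by zero, the recursion halts after finitely many steps.

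For correctness, I would proceed component by component. On a component retained in $\PS_3^*$, the associated parametrization $\Pa$ satisfies~\eqref{eq:FengGaoCases}, so Proposition~\ref{prop:solSpec} combined with Theorem~\ref{thm:FengGaospec}(1) guarantees that the attached $y_i(\textbf{b};x)$ is a genuine rational solution of $F(\textbf{b};y,y')=0$ for every $\textbf{b}$ in that component. On a component absorbed into $\PS_2^*$, Theorem~\ref{thm:FengGaospec}(2) certifies that the specialized parametrization continues to fail~\eqref{eq:FengGaoCases}, so by Theorem~\ref{thm:FengGao} together with Lemma~\ref{lem-nec} (which forces every rational solution to come from a proper parametrization of $\Cu(F)$) no non-constant rational solution of the specialized equation exists.

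Completeness of the returned solutions then follows from Theorem~\ref{thm:solCovering}: any non-constant rational solution $\tilde{y}(x)$ of $F(\textbf{b};y,y')=0$ must fall into some $\PS_{3,i}^*$ and equal $y_i(\textbf{b};x+c)$ for some $c \in \overline{\K}$, which is absorbed into the arbitrary constant of integration introduced in Step~4. The main obstacle I expect is the careful termination argument for Step~3, together with checking that the open subsets $\Omega_{FG}, \Omega_{\ddef(\Pa)}$ etc.\ interact correctly when one passes to the quotient rings encoding the lower-dimensional strata.
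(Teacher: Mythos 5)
Your proposal is correct and follows essentially the same route as the paper: completeness of the output is deduced from Theorem~\ref{thm:solCovering}, and the per-component behavior is exactly what Theorem~\ref{thm:FengGaospec} and Corollary~\ref{cor-behaviorSameComponent} record. The only difference is that you make the termination of the recursion in Step~3 explicit via Noetherian descent on the dimension of the closed complement of $\Omega_{FG}$, whereas the paper simply asserts that each step involves finitely many computations; your added detail is sound and, if anything, more careful than the published argument.
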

\begin{proof}
Let us first show that the output is a decomposition w.r.t. rational solutions. 
The $\PS_j^*$ with $j \in \{1,2,3\}$ and the $\PS_{3,i}^*$ with $i \in I^*$, respectively, are by construction disjoint since the $\PS_j$ and $\PS_{3,i}$ are. Moreover, since just intersections, unions and set-complements are used in the construction, $\PS_j^*$ and $\PS_{3,i}^*$ are constructible.

In the following we use tuples of superscripts for the iteration number and the branch in the loop (4)-(8). In step (5),   $\Omega_{\mathrm{FG}}^{(0)} = \Omega_{\mathrm{proper}(\Pa)}$ covers $\PS_{3,i}$. If step (6) is reached, then the loop (4)-(8) gets repeated with the $\Sigma^{(1,\ell)}$ such that $\bigcup_\ell \Sigma^{(1,\ell)}=\Sigma^{(0)} \setminus \Omega_{\mathrm{FG}}^{(0)}$ or $\Sigma^{(0)} \setminus \Omega_{\mathrm{FG}}^{(0)} = \emptyset$. 
Let use iteratively continue with the negative case (6) and set $\Sigma^{(j,k_j)}$ and $\Omega_{\mathrm{FG}}^{(j,k_j)}$ to the latest non-empty set in the iteration.  Let us see that 
the newly added components cover $\PS_{3,i}$ since it is a subset of
\begin{equation*}
\begin{aligned}
&\PS_{3,i} \setminus \left(\bigcup_{j \ge 0, \ell_j \ge 1}\Omega_{\mathrm{FG}}^{(j,\ell_j)}\right) \subseteq \bigcap_{j \ge 0} \PS \setminus \left(\bigcup_{\ell_j \ge 1}\Omega_{\mathrm{FG}}^{(j,\ell_j)}\right) \\ & = \bigcap_{j \ge 0, \ell_j \ge 1} \Sigma^{(j,\ell_j)} \cap (\Sigma^{(j,k_j)} \setminus \Omega_{\mathrm{FG}}^{(j,k_j)}) = \emptyset.
\end{aligned}
\end{equation*}
Thus, $\PS_1^* \cup \PS_2^* \cup \PS_3^* = \PS_1 \cup \PS_2 \cup \PS_3$ and item (a) in Definition~\ref{def-dec-ratsol} is fulfilled.

 On the other hand, for a specialization $\textbf{a}^0 \in \PS$ it holds that:
\begin{itemize}
    \item[(1)] $\PS_1 = \PS_1^*$ because $F(\textbf{a}^0;y,z)$ is not well-defined in both cases.
    \item[(2)] If $\textbf{a}^0 \in \PS_2$, then $F(\textbf{a}^0;y,z)$ does not admit a rational parametrization or it is reducible. Since a non-constant rational solution defines a rational parametrization (see Lemma~\ref{lem-nec}), $\textbf{a}^0 \in \PS_2^*$.
    \item[(3)] If $\textbf{a}^0 \in \PS_3$, then in steps (4)-(8) is decided whether $\textbf{a}^0 \in \PS_3^*$ or $\textbf{a}^0 \in \PS_2^*$. By Theorem~\ref{thm:FengGaospec}, $\textbf{a}^0 \in \PS_3^*$ if and only if $F(\textbf{a}^0;y,y')=0$ admits a non-constant rational solution given in step (9).
    In particular, the output $y_i(x+c)$ in step (9) is $\PS_{3,i}^*$--admissible.
\end{itemize}
Let us now show termination. 
Since $\Sigma^{(1)}:=\Sigma^{(0)} \setminus \Omega_{\mathrm{FG}}^{(0)}$ is a Zariski-closed set, it can be represented by the finite intersection of prime ideals $\mathcal{I}_{\ell}^{(0)}$. 
The canonical representation of $\Pa$ over the new base field $\mathcal{Q}(\mathcal{R}^{(1)})$, where $\mathcal{R}^{(1)}:=\mathcal{R}^{(0)}/\mathcal{I}_{\ell}^{(0)}$, is then used to check~\eqref{eq:FengGaoCases} over $\mathcal{Q}(\mathcal{R}^{(1)})$ in $\V(\mathcal{I}_{\ell}^{(0)})$. Note that in this step, factorization of $A/B$ and equality to zero change (for details on the computation over these fields we refer to Section 10.2 and Appendix B in~\cite{wang2},and~\cite{wang1}). 
It might be the case that the complement $\Sigma^{(1)} \setminus \Omega_{\mathrm{FG}}^{(1)}$ is again non-empty and leads to a further iteration. 
The number of iterative steps is finite because the chain of proper base fields $\mathcal{R}^{(j)}$ is bounded by the number of proper prime ideals $\mathcal{I}_{\ell}^{(j)}$, which is at most $\#|\textbf{a}|$.
Moreover, the prime decomposition in step (7) is finite such that the loop defined by the steps (4)-(8) is finite.
Thus, the termination of Algorithm \textsc{ParamDecomposition} then leads to termination of the algorithm.
\end{proof}

Let us note that $$\PS_1=\PS_1^*, \PS_2 \subseteq \PS_2^*, \PS_3 \supseteq \PS_3^*.$$
 Moreover, $\PS_3^*$ can consist of less or more components than $\PS_3$, i.e., $|I^*| \le |I|$ or $|I^*| \ge |I|$ is both possible. 

\begin{corollary}\label{cor-behaviorSameComponent}
Let $F \in \LL[y,y']$ be as in~\eqref{eq-F} and let 
\begin{equation*}
\PS= {\PS}_1^* \,\dot\cup\, {\PS}_2^*\, \dot\cup\, {\PS}_3^*, \,\,\text{with}\,\,   {\PS}_{3}^*=\dot{\bigcup}_{i\in I^*} {\PS}_{3,i}^*,
\end{equation*} 
be a decomposition w.r.t. rational solutions.  Then for every $\PS_{3,i}^*$ the specialization of the corresponding rational solution $y_i(x)$ at every $\textbf{a}^0 \in \PS_{3,i}^*$ is a rational solution of $F(\textbf{a}^0;y,y')=0$.
\end{corollary}
\begin{proof}
Let $(\PS_{3,i},\Pa_i)$ be the component of a decomposition w.r.t. parametrizations providing $(\PS_{3,i}^*, y_i(x))$. Then, by Proposition~\ref{prop:solSpec}, every specialization of $y_i(x)$ leads to a solution. 
If $\Pa_i$ does not fulfill~\eqref{eq:FengGaoCases}, then components where for every $\textbf{a}^0 \in \PS_{3,i}$ the specialization $\Pa_i(\textbf{a}^0;t)$ does not fulfill~\eqref{eq:FengGaoCases} are in $\PS_2^*$, see Theorem~\ref{thm:FengGaospec}.
\end{proof}

The following theorem shows that the output of Algorithm~\ref{alg-ConstParameters} covers all possible non-constant rational solutions of $F(y,y')=0$ and $F(\textbf{a}^0;y,y')=0$, as long as $F(\textbf{a}^0;y,y')$ remains irreducible.

\begin{theorem}\label{thm:solCovering}
Let $F \in \LL[y,y']$ be as in~\eqref{eq-F}, let $\tilde{F}(y,y')=F(\textbf{a}^0;y,y')=0$ be well--defined and irreducible with a non-constant rational solution $\tilde{y}(x)$. 
Let 
\begin{equation*}
\PS= {\PS}_1^* \,\dot\cup\,  {\PS}_2^*\, \dot\cup\,  {\PS}_3^*, \,\,\text{with}\,\,   {\PS}_{3}^*=\dot{\bigcup}_{i\in I^*}  {\PS}_{3,i}^*,
\end{equation*} 
be a decomposition w.r.t. rational solutions. 
Then there exists $i \in I^*$ such that $\textbf{a}^0 \in \PS_{3,i}^*$ and it holds that $\tilde{y}(x)=y(\textbf{a}^0;x+c)$ for some $c \in \overline{\K}$, where $y(x)$ is the corresponding admissible solution associated to $\PS_{3,i}^{*}$.
\end{theorem}
\begin{proof}
Since $\tilde{F}$ is well--defined and $\tilde{Y}=(\tilde{y}(x),\tilde{y}'(x))$ is a rational parametrization of $\Cu(\tilde{F})$, by Proposition~\ref{cor:componentpara}, there exists $i \in I$ such that, for the corresponding proper parametrization $\Pa$, it holds that $\tilde{Y}(t) = \Pa(\textbf{a}^0;s)$ for some $s \in \overline{\K(\textbf{a}^0)}(t)$. 
{Thus, $\Pa(\textbf{a}^0;t)$ fulfills~\eqref{eq:FengGaoCases}. 
By Theorem~\ref{thm:FengGaospec} (see Remark~\ref{rem-onFengGaoSpec}), also $\Pa$ fulfills~\eqref{eq:FengGaoCases}. 
Let $y(x)$ be the corresponding rational solution. 
From Theorems~\ref{thm:FengGao} and~\ref{thm:FengGaospec} we know that $y(\textbf{a}^0;x)$ defines all rational solutions of $\tilde{F}(y,y')=0$ and there is $c \in \overline{\LL}$ such that $y(\textbf{a}^0,c;x)=y(\textbf{a}^0;x+c)=\tilde{y}(x)$.
}
\end{proof}

Let us illustrate Algorithm~\ref{alg-ConstParameters} by several examples.

{
\begin{example}\label{ex-sevreral steps}
Let us consider 
\[ \begin{array}{ccl} F&= &y'^{3}+\left(-a_{1}^{2}-2 a_{2}+3 a_{3}\right) y'^{2}+\left(-2 a_{1}^{2} a_{3}+2 y a_{1}+a_{2}^{2}-4 a_{3} a_{2}+3 a_{3}^{2}\right) y' \\
\noalign{\vspace*{2mm}} 
&& -a_{1}^{2} a_{3}^{2}+2 y a_{1} a_{3}+a_{2}^{2} a_{3}-2 a_{2} a_{3}^{2}+a_{3}^{3}-y^{2}=0. 
\end{array}\]
In Step (1) of Algorithm~\ref{alg-ConstParameters}, 
\[ \PS=\C^3, \, \mathcal{S}_1=\mathcal{S}_2=\emptyset, \mathcal{S}_3=\C^3\]
and
\[ \Pa=\left(t^{3}+a_{1} t^{2}-a_{2} t, t^{2}-a_{3}\right)\]
is the associated proper parametrization to $\mathcal{S}_3$. In Step (2) we set $\mathcal{S}_{1}^{*}=\mathcal{S}_{2}^{*}=\mathcal{S}_{3}^{*}=\emptyset,$ $\mathcal{R}=\C(a_1,a_2,a_3)$ and $\Sigma=\C^3$.  
In the notation of Theorem~\ref{thm:FengGaospec},
\[ \dfrac{A}{B}:=\dfrac{P_2}{P_{1}^{\prime}}=\dfrac{t^{2}-a_{3}}{3 t^{2}+2 t a_{1}-a_{2}}\]
and hence~\eqref{eq:FengGaoCases} does not hold for $\Pa$ over $\mathcal{R}$. According to Def.~\ref{def-FG} and Theorem~\ref{thm:FengGaospec}, $\Omega_{\mathrm{FG}}$ is the complementary of the algebraic set defined by $\Res_t(A,B)$ over $\C$, that is 
\[ V_1:=\mathbb{V}_{\C}(-4 a_{1}^{2} a_{3}+a_{2}^{2}-6 a_{2} a_{3}+9 a_{3}^{2}). \]
Furthermore, $\Omega_{\mathrm{FG}} \setminus (\PS_1^* \cup \PS_2^* \cup \PS_3^*)=\Omega_{\mathrm{FG}}$ and we replace $\mathcal{S}_{2}^{*}$ by $\C^3\setminus V_1.$

Since $V_1$ is irreducible, in Step (7) we consider the prime ideal $\mathcal{I}=(-4 a_{1}^{2} a_{3}+a_{2}^{2}-6 a_{2} a_{3}+9 a_{3}^{2})$. In Step (8), we replace $\mathcal{R}$ by the quotient field $\C(a_1,a_2,a_3)/\mathcal{I}$ and $\Sigma$ by the surface $V_1$. Let us denote by $\tilde{A}, \tilde{B}$ the polynomials $A,B$ as elements in
$\left(\C(a_1,a_2,a_3)/\mathcal{I}\right)[t]$. 
Then
\[ \dfrac{\tilde{A}}{\tilde{B}}=\dfrac{2 t a_{1}+a_{2}-3 a_{3}}{6 t a_{1}+4 a_{1}^{2}+3 a_{2}-9 a_{3}}\]
Thus,~\eqref{eq:FengGaoCases} does not hold for $\Pa$ over $\C(a_1,a_2,a_3)/\mathcal{I}$. Since
\[ \Res_t(\tilde{A},\tilde{B})=8a_1^3,\]
the new $\Omega_{\mathrm{FG}}$ is $V_1\setminus \mathbb{V}(a_1)$. So, we add $V_1\setminus \mathbb{V}(a_1)$ to $\mathcal{S}_{2}^{*}$. Then, since $V_1\setminus \mathbb{V}(a_1)$ is the complementary in $V_1$ of the line $\mathbb{V}(a_1, -a_2 + 3a_3)$, the new ideal is the irreducible ideal $\hat{\mathcal{I}}=(a_1, -a_2 + 3a_3)$, the new working field is $\C(a_1,a_2,a_3)/\hat{\mathcal{I}}$, and the new subset of the parameter space is $\Sigma=\mathbb{V}(a_1, -a_2 + 3a_3)$. In this situation, let us denote by $\hat{A}, \hat{B}$ the polynomials $\tilde{A},\tilde{B}$ as elements in $\left(\C(a_1,a_2,a_3)/\hat{\mathcal{I}}\right)[t]$. 
Then
\[ \dfrac{\hat{A}}{\hat{B}}=\dfrac{1}{3}.\]
Thus,~\eqref{eq:FengGaoCases} holds for $\Pa$ over $\C(a_1,a_2,a_3)/\hat{\mathcal{I}}$. So, finally we get the rational general solution
\[ y(x)=\left(\dfrac{x}{3}+\dfrac{c}{3}\right)^{3}-3 \left(\dfrac{x}{3}+\dfrac{c}{3}\right) a_{3}\]
of the differential equation
\[ y'^{3}-3 y'^2 a_{3}+4 a_{3}^{3}-y^{2}=0.\]
The decomposition w.r.t. rational solutions is 
\[ \mathcal{S}_1^*=\emptyset, \mathcal{S}_2^*=(\C^3\setminus V_1) \cup \left(V_1\setminus \mathbb{V}(a_1)\right),  \]
and 
\[ \mathcal{S}_3^*=\left\{\left(\mathbb{V}(a_1, -a_2 + 3a_3), y(x)= \left(\dfrac{x}{3}+\dfrac{c}{3}\right)^{3}-3 \left(\dfrac{x}{3}+\dfrac{c}{3}\right) a_{3}\right)\right\} .\]
\end{example}}

\begin{example}\label{ex-1}
Let us consider $$F= 4a_1a_2^2y^4-4a_1a_2y^2y'+a_1y'^2 + a_2y^2 -y'=0$$
and the parameter space $\PS=\C^2$. 
A decomposition w.r.t. parametrization yields $\PS_1=\emptyset, \PS_2= \{(a_1,0) \mid a_1 \in \C, a_1 \ne 0 \}$ and $\PS_3=\PS_{3,1} \,\dot\cup\,\PS_{3,2}\,\dot\cup \,\PS_{3,3}$ where
\[ \begin{array}{l}
\PS_{3,1}  = \left\{\left( \C^2 \setminus \{ (a_1,a_2) \mid a_1a_2=0 \}, 
 \Pa_1 := \left(\dfrac{a_1a_2t}{a_1^3t^2-a_2^3}, \dfrac{(a_1^3t^2 + a_2^3)a_1^2t^2}{(a_1^3t^2 - a_2^3)^2}\right) \right)\right\} \\
\noalign{\vspace*{3mm}}
\PS_{3,2} =\left\{ \left( \{(0,a_2) \mid a_2 \in \C, a_2 \ne 0 \}, \Pa_2:=(t,a_2t^2)\right)\right\}
\\
\noalign{\vspace*{3mm}}
\PS_{3,3}=\left\{\left( \{(0,0)\}, \Pa_3:=(t,0)\right)\right\}.
\end{array}
\]
Note that $\Pa_1(0,a_2;t)=(0,0)$ and $F(0,a_2;y,y')=a_2 y^2-y'$.
In the case of a specialization $\ {\textbf{a}^0}:=(a_1,0) \in \PS_2$, the curve factors into lines since $F({\textbf{a}^0};y,y')=y'(a_1y'-1)$.

Using~\eqref{eq:FengGaoCases} for $\Pa_{1}=(P_1,P_2)$, we see that $$P_2/P_1' = \dfrac{-a_1t^2}{a_2}$$ leads to the rational solution $$y_1(x)=\frac{x+c}{a_1-a_2(x+c)^2}$$  and $\Omega_{\mathrm{FG}}=\PS_{3,1}$ is added to $\PS_3^*$.

For $\Pa_{2}$ we obtain that~\eqref{eq:FengGaoCases} is fulfilled with $\alpha=a_2, \beta=0$ leading to the rational solution $$y_2(x)=-\dfrac{1}{a_2(x+c)}.$$  Note that for $\textbf{a}^0=(0,a_2)$, with $a_2 \ne 0$, the solutions $y_1(x), y_2(x)$ coincide.
For ${\textbf{a}^0}=(0,0) \in \PS_{3,3}$, the specialization $F({\textbf{a}^0};y,y')=-y'$ defines a vertical line. 
Verifying~\eqref{eq:FengGaoCases} for $\Pa_2$ leads to $\alpha=0$ and the solutions are the constants (cf. Remark~\ref{rem-zero}).


We thus obtain the decomposition w.r.t. rational solutions
\[ \PS_1^*=\emptyset, \PS_2^* = \PS_2 \cup \{(0,0)\}, \PS_3^* = \PS_{3,1}^* \cup \PS_{3,2}^* \]
where 
\[ \PS_{3,1}^*=\{(\PS_{3,1},y_1)\}, \,\,  \PS_{3,2}^*=\{(\PS_{3,2},y_2)\}. \]
Let us note that in the case of ${\textbf{a}^0} \in \PS_2$, the specialization $\Pa_1({\textbf{a}^0};t)=(0,1/a_1)$ is not a rational parametrization of a component of $F({\textbf{a}^0};y,y')=y'(a_1 y'-1)$ anymore. 
For $\Pa_2({\textbf{a}^0};t)=(t/a_1,1/a_1)$, however, we find the zero $$y({\textbf{a}^0};x)=\dfrac{x+c}{a_1}$$ of $F({\textbf{a}^0};y,y')$ (cf. Proposition~\ref{prop:solSpec}).
\end{example}

\begin{example}\label{ex-localsols}
Let $F=2y-y'^2+2a_1y'-a_2^2+a_2(2a_1-2y')$. 
Then the decomposition w.r.t. parametrization is $\PS_3=\C^2$ and consists of only one component with corresponding parametrization 
$$\Pa = \left(\dfrac{t^2}{2}-a_1t,t-a_2\right).$$ 
Equation~\eqref{eq:FengGaoCases} is generically not fulfilled because $$\dfrac{A}{B}:=\dfrac{P_2}{P_1'} = \dfrac{t-a_2}{t-a_1}.$$
The leading coefficients are one and $\Res_t(t-a_2,t-a_1)=a_2-a_1$,  and we obtain $\Omega_{\mathrm{FG}}=\Omega_{\nonZ(\Res(A,B))}=\PS \setminus \{ (a_1,a_2) \in \C^2 \mid a_1=a_2 \}$. 
{On the line $$\{ (a_1,a_2) \in \C^2 \mid a_1=a_2 \}$$
the associated polynomial ideal generated by $a_1-a_2$ is prime. Let us now consider $\Pa$ over $\C[a_1,a_2]/(a_1-a_2)$. Then, it holds that $\dfrac{A}{B}=1$. 
In this case,~\eqref{eq:FengGaoCases} is fulfilled with $\alpha=1$} leading to the solution $$\tilde{y}(x)=\dfrac{x^2}{2}-ax$$ of $F((a,a);y,y')= 2y-y'^2+a^2=0$.
Thus, a decomposition w.r.t. rational solutions is $\PS_1^*=\emptyset, \PS_2^* = \C^2 \setminus \{ (a,a) \mid a \in \C \}$, $\PS_3^* = \{ (a,a) \mid a \in \C \}$ with corresponding rational solution $\tilde{y}(x)$.
\end{example}

\section*{Acknowledgements}
Authors thank the anonymous referee for his/her comments.

Authors partially supported by the grant PID2020-113192GB-I00/AEI/\-10.13039/\-501100011033 (Mathematical Visualization: Foundations, Algorithms and Applications) from the Spanish State Research Agency (Ministerio de Ciencia, Innovación y Universidades). 
First author also supported by the OeAD project FR 09/2022.

\bibliographystyle{acm}

\appendix

\section{Decomposition w.r.t. surjectivity}\label{sec-decomposition-surjectivity}

In Subsection \ref{sec-decomposition} we have seen how to decompose the parameter space  in terms of rational parametrizations. With this decomposition we have approached the problem of finding rational solutions of the differential equation. As a future line of working one may investigate the possibility of proceeding similarly for local solutions. As a particular important case, one still may consider rational parametrizations but, in that case, one needs to guarantee that all points on the curve are covered by the parametrizations. Motivated by this fact, in this appendix
we study how to refine the decomposition of $\PS_3$, in~\eqref{eq-decomp2}, Def. \ref{def-dec-param}, in order to guarantee  that the parametrizations provide a surjective covering.

Let $\Pa(t)$ be a proper rational parametrization of and algebraic curve $\Cu(F)$ over $\LL$. 
By~\cite[Theorem 6.22]{SWP08}, it holds that $\Cu(F) \setminus \Pa(\overline{\LL})$ contains at most one point. 
If this point exists, we call it the \textit{critical point} of $\Pa(t)$. 
Furthermore, by~\cite[Theorem 6.26]{SWP08}, one can always {obtain} a surjective parametrization. However, the field of parametrization of this parametrization is, in general, an algebraic field extension of the field of parametrization of $\Pa$. So, in general, one may have to introduce an algebraic element depending on $\mathbf{a}$. To avoid this, when $\Pa(t)$ is not surjective, we may work with a finite collection of parametrizations, such that the union of their images cover the curve. 
The following lemma ensures that the coefficients of these parametrizations can still be assumed to be in the field of parametrization.

\begin{lemma}\label{lem:covering}
Let $\LL$ be a field, $F(y,z) \in \LL[y,z]$, $\Cu(F)$ be a rational curve and let $\FF$ be a parametrizing field of $\Cu(F)$. 
Then, there exists a set of proper parametrizations $\{ \Pa_i(t) \}_{i \in I} \subset \FF(t)^2$, with $\#(I) \le 2$, such that $\Cu(F)= \bigcup_{i \in I} \Pa_i(\overline{\LL})$.
\end{lemma}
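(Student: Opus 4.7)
The plan is to start with a single proper parametrization $\Pa_1 \in \FF(t)^2$ and, if it misses a point, to produce a second one by the Möbius reparametrization $t \mapsto 1/s$, which remains defined over $\FF$. Since $\FF$ is a parametrizing field of $\Cu(F)$, such a $\Pa_1$ exists. By~\cite[Theorem 6.22]{SWP08} the complement $\Cu(F) \setminus \Pa_1(\overline{\LL})$ contains at most one point; if it is empty, set $I = \{1\}$ and we are done.

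Otherwise let $P$ be the unique critical point and define $\Pa_2(s) := \Pa_1(1/s)$. Writing $\Pa_1 = (p_1/q_1, p_2/q_2)$ in reduced form and clearing negative powers of $s$ shows $\Pa_2 \in \FF(s)^2$, and since $s \mapsto 1/s$ is a birational automorphism of $\mathbb{P}^1$, $\Pa_2$ is still a proper parametrization of $\Cu(F)$. It then suffices to prove $P \in \Pa_2(\overline{\LL})$, because in that case
\[
\Cu(F) = \Pa_1(\overline{\LL}) \cup \{P\} \subseteq \Pa_1(\overline{\LL}) \cup \Pa_2(\overline{\LL}) \subseteq \Cu(F),
\]
which yields the desired covering with $I = \{1,2\}$.

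To see that $\Pa_2(0) = P$, extend $\Pa_1$ to the morphism $\tilde{\Pa}_1 : \mathbb{P}^1 \to \mathbb{P}^2$ over $\overline{\LL}$. Its image is a closed irreducible one-dimensional subvariety containing the Zariski-dense set $\Pa_1(\overline{\LL}) \subseteq \Cu(F)$, hence equals the projective closure $\overline{\Cu(F)}$; in particular $\tilde{\Pa}_1$ is surjective on $\overline{\LL}$-points. Thus $P = \tilde{\Pa}_1(t_0)$ for some $t_0 \in \mathbb{P}^1(\overline{\LL})$, and since no affine value of $t$ can map to $P$, necessarily $t_0 = \infty$. As $P$ is affine, each component of $\Pa_1$ satisfies $\deg p_i \le \deg q_i$, so the substitution $t = 1/s$ produces a rational map regular at $s = 0$ with $\Pa_2(0) = P$. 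The delicate step is exactly the identification $\tilde{\Pa}_1(\infty) = P$: it converts the properness of $\Pa_1$ into surjectivity of its projectivization and thereby locates the missing point; once this is in place, the Möbius reparametrization mechanically recovers $P$.
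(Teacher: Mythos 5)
Your proof is correct and follows essentially the same route as the paper: both start from a proper parametrization $\Pa_1\in\FF(t)^2$, use \cite[Corollary 6.20 and Theorem 6.22]{SWP08} to reduce to a single missing critical point, and recover that point by a M\"obius reparametrization still defined over $\FF$ (the paper takes $t\mapsto 1/t+\mu$ for a suitably generic $\mu\in\FF$ and compares the critical points of $\Pa_1$ and $\Pa_2$, whereas you take $t\mapsto 1/t$ and verify directly that $\Pa_2(0)=P$ via surjectivity of the projectivized morphism). The one step you gloss over is excluding that $\tilde{\Pa}_1$ sends some root of $q_1q_2$ to the affine point $P$ when you conclude $t_0=\infty$; this is harmless, since the extension of $\Pa_1$ to a morphism on $\mathbb{P}^1$ necessarily sends every pole of the reduced components to a point at infinity of the projective closure of $\Cu(F)$, so $P$ being affine forces $t_0=\infty$ as you claim.
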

\begin{proof}
We assume w.l.o.g. that $\Cu(F)$ is neither a vertical nor a horizontal line. So, in the following, none component of the parametrizations is constant.
Since $\FF$ is a parametrizing field, let $\Pa_1(t)\in \FF(t)^2$ be a proper parametrization. 
Let
\[ \Pa_1=\left(\dfrac{p_1}{q_1},\dfrac{p_2}{q_2} \right), \]
with $\gcd(p_i,q_i)=1$.
If there exists $i\in \{1,2\}$ such that $\deg(p_i)>\deg(q_i)$ then $\Pa_1(\overline{\LL})=\Cu(F)$ (see~\cite[Corollary 6.20]{SWP08}), and the statement follows with $I=\{1\}$.

Let us assume that $\deg(p_i)\leq \deg(q_i)$ for $i\in \{1,2\}$.  
Moreover, let us also assume that none of the polynomials $p_1,p_2,q_1,q_2$ has zero as a root; if this would be the case, we can apply a change $\Pa_1(t+a)$ with $a\in \LL$.

Let us express the polynomials $p_i,q_j$ as
\[ p_1=\sum_{i=0}^{r}a_i t^i,\, q_1=\sum_{i=0}^{n}b_i t^i,\, p_2=\sum_{i=0}^{s}c_i t^i,\, q_2=\sum_{i=0}^{m}d_i t^i, \]
where $a_ra_0b_nb_0c_sc_0d_md_0\neq 0$. 
Then, by~\cite[Theorem 6.22]{SWP08},
\[\Cu(F)\setminus \{ (a_r/b_n,c_s/d_m)\} \subset \Pa_1(\overline{\LL}).\]
Now, let $\mu\in \FF$ be such that $p_1(\mu)b_n-q_1(\mu)a_r\neq 0$ and $p_1(\mu)q_1(\mu)p_2(\mu)q_2(\mu)\neq 0$; this is possible because $b_n,a_r,p_1,q_1,p_2,q_2$ are not zero. 
We consider the parametrization $\Pa_2(t)=\Pa_1(1/t+\mu)$. That is
\[\begin{array}{ccr} \Pa_2(t) &=&\left(\dfrac{(a_r+\tilde{a}_{r-1} t+ \cdots+\tilde{a}_{1} t^{r-1} +p_1(\mu)t^r)t^{n-r}}{b_n+\tilde{b}_{n-1} t+ \cdots+\tilde{b}_{1} t^{n-1} +q_1(\mu)t^n}, \right.\\
\noalign{\vspace*{2mm}}
&& \left. \dfrac{(c_s+\tilde{c}_{s-1} t+ \cdots+\tilde{c}_{1} t^{s-1} +p_1(\mu)t^r)t^{n-r}}{d_m+\tilde{d}_{m-1} t+ \cdots+\tilde{d}_{1} t^{m-1} +q_1(\mu)t^n}\right), \end{array} \]
for some $\tilde{a}_i, \tilde{b}_i, \tilde{c}_i, \tilde{d}_i\in \FF$.
Now, $\Cu(F)\setminus \{ (p_1(\mu)/q_1(\mu), p_2(\mu)/q_2(\mu))\} \subset \Pa_2(\overline{\LL})$. 
Since $p_1(\mu)/q_1(\mu)\neq a_r/b_n$, the statement follows for $I=\{1,2\}$.
\end{proof}

\begin{remark}
Throughout this paper, when we speak about a surjective rational covering, we will mean the covering provided by the proof of Lemma~\ref{lem:covering}.
\end{remark}

For a given proper rational parametrization $\Pa$ of $\Cu(F)$, we consider the open set $\Omega_{\proper(\Pa)} \subseteq \PS$, introduced before (see Subsection \ref{sec-pre}). 
In addition, let $\{ \Pa_i(t) \}_{i \in I} \subset \FF(t)^2$, with $\#(I) \le 2$  be a surjective rational covering of a given curve $\Cu(F)$.  Then, we introduce a new open subset in the following way. Let $C_i:=(A_{i,1},A_{i,2})\in \FF^2$ be the  critical point of $  \Pa_i(t)$; and let $\Pa_i$ be expressed in reduced form as $(p_{i,1}/q_{i,1},p_{i,2}/q_{i,2})$. We consider the polynomials
\[ g_{i,j}:=A_{i,j} q_{i,j}-p_{i,j}, \,\,\,\text{$i\in I$ and $j\in \{1,2\}$}, \]
and 
\[ g_i:=\gcd(g_{i,1},g_{i,2}),\, R_i:=\Res_t(g_i, q_{i,1} q_{i,2})), \,\,\, i\in \{1,2\}. \] 
Then, we define the open subset of $\PS$
\[ \Omega_{\surj}(\{ \Pa_i(t) \}_{i \in I}):=\bigcap_{i\in I} \left(\Omega_{\gcd(g_{i,1},g_{i,2})} \cap \Omega_{\nonZ(R_i)}
\right).
\]

\begin{lemma}\label{lem-specializationCovering}
Let $\{ \Pa_i(t) \}_{i \in I} \subset \FF(t)^2$, with $\#(I) \le 2$, be a surjective rational covering of a given curve $\Cu(F)$ such that $\Cu(F)= \bigcup_{i \in I} \Pa_i(\overline{\LL})$. 
Let $\textbf{a}^0 \in \bigcap_{i \in I} \Omega_{\proper(\Pa_i)} \cap \Omega_{\surj}(\{ \Pa_i(t) \}_{i \in I}) \subset \PS$.
Then $$\Cu(F,\textbf{a}^0)= \bigcup_{i \in I} \Pa_i(\overline{\K}).$$
\end{lemma}
\begin{proof}
	Since $\textbf{a}^0\in \Omega_{\proper(\Pa_i)}$, by~\cite[Theorem 5.5]{falkensteiner2023rationality}), $\Pa_i(\mathbf{a}^0;t)$ parametrizes properly $\Cu(\mathbf{a}^0;F)$. Moreover, the numerators and denominators of $\Pa_i(t)$ stay coprime (see proof of Theorem 5.5 in~\cite{falkensteiner2023rationality}). 
	Furthermore, the degrees of the numerators and denominators are also preserved. So, the critical point of $\Pa_i(t,\mathbf{a}^0)$ is the specialization of the critical point of $\Pa_i(t)$, namely $C_i(\mathbf{a}^0)$. 
	It remains to prove that $C_i(\mathbf{a}^0)$ is reachable by $\Pa_j(t,\mathbf{a}^0)$ for some $j\in I$. By hypothesis, there exists $t_0\in \overline{\LL}$, and $j\in I$ such that $\Pa_j(t_0)=C_i$. In particular, $g_j(t_0)=0$ and hence $\deg_t(g_j)>0$. On the other hand, since $\mathbf{a}^0\in \Omega_{\gcd(g_{j,1},g_{j,2})}$, by \cite[Corollary 3.8]{falkensteiner2023rationality}), $g_j(t,\mathbf{a}^0)=\gcd(g_{j,1}(t,\mathbf{a}^0),g_{j,2}(t,\mathbf{a}^0))$ and $\deg(\gcd(g_{j,1}(t,\mathbf{a}^0),g_{j,2}(\mathbf{a}^0;t)))=\deg(g_j(\mathbf{a};t))>1$. 
	Let $t_1\in \overline{\K}$ be a root of $\gcd(g_{j,1}(\mathbf{a}^0;t),g_{j,2}(\mathbf{a}^0;t))$. Since $\mathbf{a}^0\in \Omega_{\nonZ(R_j)}$, it holds that $q_{j,1}(\mathbf{a}^0;t_1)q_{j,2}(\mathbf{a}^0;t_1)\neq 0$. Thus, $\Pa_j(\mathbf{a}^0;t_1)=C_i(\mathbf{a}^0)$.
\end{proof}

Using the previous result, we can further decompose the parameter space under the surjectivity  criterion.

{\begin{definition}\label{def-dec-surj}
Let $F\in \LL[y,z]$ be irreducible. 
		Let $I\subset \N$ be finite. For $i\in I$,  let  $\tilde{\PS}_1,\tilde{\PS}_2,\tilde{\PS}_{3,i}\subset \PS$ be disjoint sets,  and let 
		\[  \tilde{\PS}_{3}=\displaystyle{\dot{\bigcup}_{i\in I} \{(\tilde{\PS}_{3,i},\{\tilde{\Pa}_{i,j}(\textbf{a},t) \}_{j\in J_i} )\}} \] 
		where $\{\tilde{\Pa}_{i,j} \}_{j\in J_i}$ is a finite set of $\tilde{\PS}_{3,i}$--admissible parametrizations. 
		We say that 
		\begin{equation}\label{eq-decomp3}
			\tilde{\PS}_1 \,\dot\cup\, \tilde{\PS}_2\, \dot\cup\, \tilde{\PS}_3, 
		\end{equation} 
		is a \textit{decomposition w.r.t. surjective (rational) parametrizations of $\Cu(F)$} if  
		\begin{itemize}
			\item[(a)] $\displaystyle{\PS=\tilde{\PS}_1 \,\dot\cup\,\tilde{ \PS}_2\,\dot\cup\, \tilde{\PS}_3.}$
			\item[(b)]
			For every specialization $\textbf{a}^0\in \tilde{\PS}_i$, case (i) holds:
			\begin{enumerate}
				\item either $F(\textbf{a}^0;y,z)$ is not well--defined or $F(\textbf{a}^0;y,z)\in \overline{\K}$;
				\item the genus of  $\Cu(\textbf{a}^0;F)$ is positive, or $F(\textbf{a}^0;y,z)$ is reducible (over $\overline{\K}$);
				\item the genus of $\Cu(\textbf{a}^0;F)$ is zero and $\{\tilde{\Pa}_{i,j}\}_{j \in J_i}$ is a surjective (proper) rational covering of $\Cu(\textbf{a}^0;F)$; that is $\tilde{\Pa}_{i,j}$ is a $\tilde{\PS}_i$--admissible proper parametrization, for all $j \in J_i$, and 
				\[\Cu(\textbf{a}^0;F)=\bigcup_{j\in J_i} \tilde{\Pa}_{i,j} (\overline{\K}). \]
			\end{enumerate}
		\end{itemize}
\end{definition}}

	\begin{remark}\label{rem-eq-decomp3} \,
		\begin{enumerate}
			\item The main difference between a decomposition w.r.t. parametrizations (see Definition~\ref{def-dec-param}) and w.r.t. surjective parametrizations (see Definition~\ref{def-dec-surj}) is that, in the first case, each $\PS_{3,i}$ contains a rational parametrization that specializes properly while, in the second case, each $\tilde{\PS}_{3,i}$ contains a finite set of parametrizations which union of images covers the whole curve and the property is preserved under specializations.
			\item For computing a decomposition w.r.t. surjective (rational) pa\-ra\-me\-tri\-za\-tions, one may proceed as follows. We consider a decomposition w.r.t. (rational) parametrizations (see Definition \ref{eq-decomp2} and Remark \ref{rem-def-dec-param}). Then, we take $\tilde{\PS}_1=\PS_1$ and $\tilde{\PS}_2=\PS_2$. Now, in $\PS_3$, for each parametrization $\Pa_i$, associated to the component $\PS_{3,i}$, we apply Lemma~\ref{lem:covering} to get $\{\tilde{\Pa}_{i,j} \}_{j\in J_i}$ and we replace, in the construction in~\cite[Section 6]{falkensteiner2023rationality}, $\Omega_{\proper(\Pa_i)}$ by $$\Omega_i:=\bigcap_{j\in J_i} \Omega_{\proper(\tilde{\Pa}_{i,j})} \cap \Omega_{\surj}(\{ \tilde{\Pa}_{i,j}(t) \}_{j \in I_j}).$$
			Then, eventually a finite number of constructible sets $\tilde{\PS}_{3,i}$ is achieved. For this purpose, by an iterative construction, we adjoin to every $\tilde{\PS}_i$ and $\tilde{\PS}_{3,i}$ a computable field $\FF_{\mathrm{J}}$, where $\mathrm{J}$ denotes an ideal represented by a Gr\"obner basis, such that every specialization $\textbf{a}^0 \in \tilde{\PS}_1, \textbf{a}^0 \in \tilde{\PS}_{2}$ or $\textbf{a}^0 \in \tilde{\PS}_{3,i}$, respectively, can be treated simultaneously and leads to an algorithmic treatment (see also~\cite[Section 6]{falkensteiner2023rationality}).
		\end{enumerate}
\end{remark}

\begin{theorem}\label{thm-surjectiveSpecialization}
	Let $F \in \K[y,z]$ be irreducible and let $$\PS=\tilde{\PS}_1 \,\dot\cup\, \tilde{\PS}_2\, \dot\cup\, \tilde{\PS}_3,\, \,\,\text{with}\,\,\,\tilde{\PS}_{3}=\dot{\bigcup}_{i\in I} \tilde{\PS}_{3,i},$$ be a decomposition w.r.t. surjective parametrizations of $\Cu(F)$.	Then, there exists a set $\{ \Pa_{i,j}(t) \}_{i \in I, j \in J_i}$, with $\#(J_i) \le 2$, such that for every $\textbf{a}^0 \in \tilde{\PS}_{3,i}$
	\begin{enumerate}
		\item $\Pa_{i,j}(\textbf{a}^0;t)$, $j \in J_i$, are proper parametrizations of $\Cu(\textbf{a}^0;F)$; and
		\item $\Cu(\textbf{a}^0;F)= \bigcup_{j \in J} \Pa_{i,j}(\overline{\K(\textbf{a}^0)})$.
	\end{enumerate}
\end{theorem}
\begin{proof}
	From~\cite[Remark 6.1]{falkensteiner2023rationality} we obtain the decomposition of the parameter space $\PS= \PS_1 \, \dot\cup \,\PS_2\, \dot\cup\, \PS_3$ and proper parametrizations $\Pa_i$ such that $\Pa_i(\textbf{a}^0;t)$ is a proper parametrization of $\Cu(\textbf{a}^0;F)$ for every $\textbf{a}^0 \in \Omega_{\proper(\Pa_i)}$. 
	For every $\Pa_i$, by Lemma~\ref{lem:covering}, there are $\Pa_{i,j}$, $j \in J_i$ with $\#(J_i) \le 2$, such that $\Cu(\textbf{a}^0;F)= \bigcup_{j \in J}\Pa_{i,j}(\overline{\K})$. 
	Then the result follows from Lemma~\ref{lem-specializationCovering}.
\end{proof}

In this situation, we can execute Algorithm \ref{alg-ConstParameters}, applying to the output of Step 1 the computational approach described  in Remark \ref{rem-eq-decomp3}, so that we get a decomposition w.r.t. rational solutions  through a decomposition w.r.t. surjective (rational) parametrizations of $\Cu(F)$. 
 

\end{document}